\numberwithin{equation}{section}
\newtheorem{theorem}{Theorem}[section]
\newtheorem{definition}{Definition}[section]
\newtheorem{lemma}{Lemma}[section]
\newtheorem{remark}{Remark}[section]
\newcommand{\8}{\infty}
\newcommand{\el}{\ell}
\newcommand{\be}{\begin{eqnarray*}}
\newcommand{\ee}{\end{eqnarray*}}
\newcommand{\beq}{\begin{equation}}
\newcommand{\eeq}{\end{equation}}
\newcommand{\beqn}{\begin{equation*}}
\newcommand{\eeqn}{\end{equation*}}
\newcommand{\bsp}{\begin{split}}
\newcommand{\esp}{\end{split}}
\begin{document}
\title{On the Cauchy problem for\\ Gross-Pitaevskii hierarchies}

\thanks{{\it 2010 Mathematics Subject Classification:} 35Q55, 81V70.}
\thanks{{\it Key words:} Gross-Pitaevskii hierarchy, nonlinear Schrodinger equation, Cauchy problem, Space-time type estimate.}

\author{Zeqian Chen}

\address{Wuhan Institute of Physics and Mathematics, Chinese Academy of Sciences, West District 30, Xiao-Hong-Shan, Wuhan 430071, China}



\author{Chuangye Liu}

\address{School of Mathematics and Statistics, Central China Normal University, Luo-Yu Road 152, Wuhan 430079, China}

\email{chuangyeliu1130@126.com}


\thanks{C.Liu is partially supported by NSFC grants No.11071095}

\date{}
\maketitle
\markboth{Z. Chen and C. Liu}%
{Gross-Pitaevskii hierarchies}

\begin{abstract}
The purpose of this paper is to investigate the Cauchy problem for the Gross-Pitaevskii infinite linear
hierarchy of equations on $\mathbb{R}^n,$ $n \geq 1.$ We prove local existence and uniqueness of solutions in certain Sobolev type spaces $\mathrm{H}^{\alpha}_{\xi}$ of sequences of marginal density operators with $\alpha > n/2.$ In particular, we give a clear discussion of all cases $\alpha > n/2,$ which covers the local well-posedness problem for Gross-Pitaevskii hierarchy in this situation.
\end{abstract}


\section{Introduction}\label{Intro}

Motivated by recent experimental realizations of Bose-Einstein condensation the theory of dilute, inhomogeneous Bose systems is
currently a subject of intensive studies in physics \cite{DGPS}. The ground state of bosonic atoms in a trap has been shown
experimentally to display Bose-Einstein condensation (BEC). This fact is proved theoretically by Lieb {\it et al} \cite{LS, LSY1,
LSY2} for bosons with two-body repulsive interaction potentials in the dilute limit, starting from the basic Schr\"{o}dinger equation.
On the other hand, it is well known that the dynamics of Bose-Einstein condensates are well described by the Gross-Pitaevskii
equation \cite{G1, G2, P}. A rigorous derivation of this equation from the basic many-body Schr\"{o}dinger equation in an appropriate
limit is not a simple matter, however, and has only achieved recently in three spatial dimensions by Elgart, Erd\"{o}s, Schlein
and Yau \cite{EESY, ESY1, ESY2, ESY3, ESYprl, EY}, based on the notion of so-called Gross-Pitaevskii hierarchies. In their program an important step is
to prove uniqueness to the Gross-Pitaevskii hierarchy via Feynman graph (see \cite{ESY2, KM}).

Recently, T.Chen and N.Pavlovi\'{c} \cite{CP1} started to investigate the Cauchy problem for the Gross-Pitaevskii hierarchy, using a Picard-type fixed point argument. In the present paper, we continue this line of investigation. We will prove local existence and uniqueness of solutions in certain Sobolev type spaces $\mathrm{H}^{\alpha}_{\xi}$ (for definition see Section \ref{PreMainresult} below) of sequences of marginal density operators with $\alpha > n/2.$ Instead of using a fixed point principle as in \cite{CP1}, here we use the fully expanded iterated Duhamel series, and a Cauchy convergence criterion, without additional conditions on any spacetime norms. The assumption of $\alpha> n/2$ allows us to significantly simplify the approaches and provide an improvement for the previous work \cite{CP1} in all cases $\alpha> n/2.$ Our proof involves the simple property that the interaction operators $B^{(k)}$ are bounded maps from the $k+1$-particle Hilbert space $\mathrm{H}^{\alpha}_{k+1}$ to the $k$-particle Hilbert space $\mathrm{H}^{\alpha}_k$ in the cubic case. A case of this type has previously been presented by Chen-Pavlovi\'{c} \cite{CP3} in their derivation of the quintic NLS for $n=1.$ In the much more difficult situation $\alpha \leq n/2,$ as done recently in \cite{CP2}, it is necessary to invoke the Strichartz estimates of the type introduced in the pioneering work of Klainerman-Machedon \cite{KM}.

The paper is organized as follows. In Section \ref{PreMainresult}, some notations and the main result are presented. Section \ref{PreEstimate} is
devoted to present elementary estimates which will be used later. In particular, we will prove the fact that the interaction operators $B^{(k)}$ are bounded maps from the $k+1$-particle Hilbert space $\mathrm{H}^{\alpha}_{k+1}$ to the $k$-particle Hilbert space $\mathrm{H}^{\alpha}_k$ in the cubic case for $\alpha > n/2.$ The proof is completely analogous to that of the classical Sobolev inequality $\| f \|_{L^{\8} (\mathbb{R}^n)} \le C \| f \|_{\mathrm{H}^{\alpha} (\mathbb{R}^n)}.$ In section \ref{ProofTh-nge1}, the main result is proved. Finally, in Section \ref{GPHierarchyQuintic}, we discuss the so-called quintic Gross-Pitaevskii hierarchy and extend the result obtained in the previous sections to that case.

\section{Preliminaries and statement of the main result}\label{PreMainresult}

\subsection{Gross-Pitaevskii hierarchies}

As follows, we denote by $x$ a general variable in $\mathbb{R}^n$ and by $\mathbf{x}=(x_1,\cdots , x_N)$ a point in $\mathbb{R}^{N n}.$ We will also use the notation $\mathbf{x}_k=(x_1,\ldots, x_k) \in \mathbb{R}^{k n}$ and $\mathbf{x}_{N-k}=(x_{k+1},\ldots, x_N)\in \mathbb{R}^{(N-k) n}.$ For a function $f$ on $\mathbb{R}^{k n}$ we
let
\be
(\Theta_{\sigma} f) (x_1, \ldots , x_k) = f (x_{\sigma (1)}, \ldots , x_{\sigma (k)} )
\ee
for any permutation $\sigma \in \Pi_k$ ($\Pi_k$ denotes the set of permutations on $k$ elements). Then, each $\Theta_{\sigma}$ is a unitary operator on $L^2 ( \mathbb{R}^{k n} ).$ A bounded operator $A$ on $L^2 ( \mathbb{R}^{k n} )$ is called {\it $k$-partite symmetric} or simply {\it symmetric} if
\beq\label{eq:OperatorSymm}
\Theta_{\sigma} A \Theta_{\sigma^{-1}} = A
\eeq
for every $\sigma \in \Pi_k.$ Evidently, a density operator $\gamma^{(k)}$ on $L^2 ( \mathbb{R}^{k n} )$ (i.e., $\gamma^{(k)} \ge 0$ and $\mathrm{tr} \gamma^{(k)} =1$) with the kernel function $\gamma^{(k)} (\mathbf{x}_k; \mathbf{x}'_k)$ is $k$-partite
symmetric if and only if\begin{equation*} \gamma^{(k)}
(x_1,\dotsc,x_k;x'_1,\dotsc,x'_k)= \gamma^{(k)}
(x_{\sigma(1)},\dotsc,x_{\sigma(k)};x'_{\sigma(1)},\dotsc,x'_{\sigma(k)})
\end{equation*}
for any $\sigma \in \Pi_k.$

Also, we set
\be
L^2_s ( \mathbb{R}^{k n} ) = \big \{ f \in L^2 ( \mathbb{R}^{k n} ):\; \Theta_{\sigma} f = f,\;\forall \sigma \in \Pi_k \big \},
\ee
equipped with the inner product of $L^2 ( \mathbb{R}^{k n} ).$ Clearly, $L^2_s ( \mathbb{R}^{k n} )$ is a Hilbert subspace of $L^2 ( \mathbb{R}^{k n} ).$ It is easy to check that any $k$-partite symmetric operator on $L^2 ( \mathbb{R}^{k n} )$ preserves $L^2_s ( \mathbb{R}^{k n} ).$

\begin{definition}\label{df:GPHierarchy}
Given $n \geq 1,$ the $n$-dimensional Gross-Pitaevskii (GP) hierarchy refers to a sequence $\{ \gamma^{(k)}_{t} \}_{k \geq 1}$ of $k$-partite symmetric density operators on $L^2 ( \mathbb{R}^{k n} ),$ where $t \geq 0,$ which satisfy the Gross-Pitaevskii infinite linear hierarchy of equations,
\beq\label{eq:GPHierarchyEqua}
i \partial_t \gamma^{(k)}_t = \big [- \Delta^{(k)}, \gamma^{(k)}_t \big ] + \mu B^{(k)} \gamma^{(k+1)}_t,~~\Delta^{(k)} = \sum^k_{j=1} \Delta_{x_j},\; \mu = \pm 1,
\eeq
with initial conditions
\be
\gamma^{(k)}_{t=0} = \gamma^{(k)}_{0},\quad k =1, 2, \ldots.
\ee
Here, $\Delta_{x_j}$ refers to the usual Laplace operator with respect to the variables $x_j \in {\mathbb R}^n$ and the operator $B^{(k)}$ is defined by
\be
B^{(k)} \gamma^{(k+1)}_t = \sum^k_{j=1} \mathrm{t r}_{k+1} \big [\delta(x_j-x_{k+1}),\gamma^{(k+1)}_t \big ]
\ee
where the notation $\mathrm{t r}_{k+1}$ indicates that the trace is taken over the $(k+1)$-th variable.
\end{definition}

As in \cite{CP1}, we refer to \eqref{eq:GPHierarchyEqua} as the cubic GP hierarchy. For $\mu= 1$ or $\mu = -1$ we refer to the corresponding
GP hierarchies as being defocusing or focusing, respectively. We note that the cubic Gross--Pitaevskii hierarchy accounts for two-body interactions between the Bose particles (e.g., see \cite{DGPS, ESYprl} and references therein for details).

\begin{remark}\label{rk:GPHierarchyEquaFunct}
In terms of the kernel functions $\gamma^{(k)}_{t}({\bf x}_k;{\bf x}'_k),$ we can rewrite \eqref{eq:GPHierarchyEqua} as follows:
\beq\label{eq:GPHierarchyEquaFunct}
\big ( i \partial_t + \triangle^{(k)}_{\pm} \big ) \gamma^{(k)}_t ({\bf x}_k;{\bf x}'_k) = \mu \big [ B^{(k)} \gamma^{(k+1)}_t \big ] ({\bf x}_k; {\bf x}'_k ),
\eeq
where $\triangle^{(k)}_{\pm} = \sum^k_{j=1} ( \Delta_{x_j} - \Delta_{x'_j} ),$ with initial conditions
\be
\gamma^{(k)}_{t=0}({\bf x}_k;{\bf x}'_k)=\gamma^{(k)}_{0}({\bf x}_k;{\bf x}'_k),\quad k =1, 2, \ldots .
\ee
In particular, the action of $B^{(k)}$ on density operators with smooth kernel functions, $\gamma^{(k+1)} ({\bf x}_{k+1}; {\bf x}'_{k+1})
\in \mathcal{S} (\mathbb{R}^{(k+1)n} \times \mathbb{R}^{(k+1)n}),$ is given by
\beq\label{eq:BOperatorFunct}
\begin{split}
\big [ B^{(k)} \gamma^{(k+1)}_t \big ] ( {\bf x}_{k}; {\bf x}'_{k} )=& \sum^k_{j=1} \int d x_{k+1} d x'_{k+1} \gamma^{(k+1)}_t ({\bf x}_{k}, x_{k+1}; {\bf x}'_{k}, x'_{k+1})\\
& \; \times \delta ( x'_{k+1} - x_{k+1} ) \big [ \delta(x_j-x_{k+1}) - \delta(x'_j-x_{k+1}) \big ].
\end{split}
\eeq
The action of $B^{(k)}$ can be extended to generic density operators. This will be made precise in Lemma \ref{le:BOperatorEstimate-nge1}.
\end{remark}

\begin{remark}\label{rk:GPEqua}
Let $\varphi \in \mathrm{H}^1(\mathbb{R}^n),$ then one can easily verify that a particular solution to \eqref{eq:GPHierarchyEquaFunct} with initial conditions
\be
\gamma^{(k)}_{t=0}({\bf x}_k; {\bf x}'_k) = \prod^k_{j=1} \varphi(x_j) \overline{\varphi(x'_j)},\quad k=1,2,\ldots,
\ee
is given by
\be
\gamma^{(k)}_t ({\bf x}_{k}; {\bf x}'_{k} ) = \prod^k_{j=1} \varphi_t (x_j) \overline{\varphi_t ( x'_j )}~~k=1,2,\ldots,
\ee
where $\varphi_t$ satisfies the cubic non-linear Schr\"odinger equation
\beq\label{eq:GPEqua}
i\partial_t \varphi_t = -\Delta \varphi_t + \mu |\varphi_t|^2 \varphi_t,\quad \varphi_{t=0}=\varphi,
\eeq
which is {\it defocusing} if $\mu =1,$ and {\it focusing} if $\mu= -1.$
\end{remark}

The Gross-Pitaevskii hierarchy \eqref{eq:GPHierarchyEqua} can be written in the integral form
\beq\label{eq:GPHierarchyIntEqua}
\gamma^{(k)}_t = {\mathcal U}^{(k)}_0(t) \gamma^{(k)}_0 + \int^{t}_{0} d s~ {\mathcal U}^{(k)}_0 (t-s) \tilde{B}^{(k)} \gamma^{(k+1)}_s,\; k=1,2,\ldots,
\eeq
where $\tilde{B}^{(k)} = - i \mu B^{(k)}.$ Hereafter, the free evolution operator is defined by
\be
{\mathcal U}^{(k)}_0(t) A =\exp \big ( i t \Delta^{(k)} \big ) A \exp \big(- it \Delta^{(k)} \big ),~~k=1,2,\ldots,
\ee
for every operator $A$ on $L^2 ( \mathbb{R}^{k n} ).$ The action of ${\mathcal U}^{(k)}_0(t)$ on kernel functions $\gamma^{(k)} \in L^2 ( \mathbb{R}^{k n} \times \mathbb{R}^{k n} )$ is given by
\beq\label{eq:UActionFunct}
{\mathcal U}^{(k)}_0(t) \gamma^{(k)} ({\bf x}_k, {\bf x}'_k ) = e^{ - i t \triangle^{(k)}_{\pm}} \gamma^{(k)} ({\bf x}_k, {\bf x}'_k ).
\eeq
Formally we can expand the solution $\gamma^{(k)}_t$ of \eqref{eq:GPHierarchyIntEqua}
for any $m \geq 1$ as
\beq\label{eq:DuhamelExpan}
\begin{split}
\gamma^{(k)}_t & = {\mathcal U}^{(k)}_0 (t) \gamma^{(k)}_0 + \sum^{m-1}_{j=1} \int^t_0 d s_1 \int^{s_1}_0 d s_2 \cdots
\int^{s_{j-1}}_0 d s_j {\mathcal U}^{(k)}_0(t-s_1) \tilde{B}^{(k)} \cdots\\
& \; \times {\mathcal U}^{(k+j-1)}_0 ( s_{j-1} - s_j ) \tilde{B}^{(k+j-1)} {\mathcal U}^{(k+j)}_0 (s_j) \gamma^{(k+j)}_0\\
& \; + \int^t_0 d s_1 \int^{s_1}_0 d s_2 \cdots \int^{s_{m-1}}_0 d s_m {\mathcal U}^{(k)}_0 ( t-s_1 ) \tilde{B}^{(k)} \cdots \\
& \; \times {\mathcal U}^{(k+m-1)}_0 ( s_{m-1} - s_m ) \tilde{B}^{(k+m-1)} \gamma^{(k+m)}_{s_m},
\end{split}
\eeq
with the convention $s_0 =t.$ The terms in the summation contain only the initial data. The last error term involves the density operator at an intermediate time $s_m.$

\subsection{Statement of the main result}

In order to state our main results, we require some more notation. We will use $\gamma^{(k)}, \rho^{(k)}$ for denoting either (density) operators or kernel functions. For $k \geq 1$ and $\alpha >0,$ we denote by $\mathrm{H}^{\alpha}_k = \mathrm{H}^{\alpha} (\mathbb{R}^{ k n} \times \mathbb{R}^{k n})$ the space of measurable functions $\gamma^{(k)} = \gamma^{(k)} ( {\bf x}_k, {\bf x}'_k )$ in $L^2(\mathbb{R}^{k n} \times \mathbb{R}^{k n})$ such that
\be\begin{split}
\| \gamma^{(k)} \|_{\mathrm{H}^{\alpha}_k} : = \| S^{(k, \alpha)} \gamma^{(k)} \|_{L^2(\mathbb{R}^{k n} \times \mathbb{R}^{k n})} < \8,
\end{split}\ee
where
\be\begin{split}
S^{(k,\alpha)} :  = \prod_{j=1}^k \big [ (1 - \Delta_{x_j} )^{\frac{\alpha}{2}} (1 - \Delta_{x'_j} )^{\frac{\alpha}{2}} \big ].
\end{split}\ee
Evidently, $\mathrm{H}^{\alpha}_k$ is a Hilbert space with the inner product
\be\begin{split}
\langle \gamma^{(k)}, \rho^{(k)} \rangle = \big \langle S^{(k,\alpha)} \gamma^{(k)},\; S^{(k,\alpha)} \rho^{(k)} \big \rangle_{L^2(\mathbb{R}^{k n} \times \mathbb{R}^{k n})}.
\end{split}\ee
Moreover, the norm $\|\cdot\|_{\mathrm{H}^{\alpha}_k}$ is invariance under the action of $\mathcal{U}_0^{(k)}(t),$ that is,
\be
\| \mathcal{U}_0^{(k)}(t) \gamma^{(k)} \|_{\mathrm{H}^{\alpha}_k} = \| \gamma^{(k)} \|_{\mathrm{H}^{\alpha}_k}
\ee
because $\exp \{\pm it \Delta^{(k)} \}$ commutate with $\Delta_{x_j}$ for any $j.$

Let $0 < \xi < 1$ and $\alpha >0,$ we define
\beq\label{eq:SobolevSpace}
\mathcal{H}_{\xi}^{\alpha} = \Big \{ \Gamma = \{ \gamma^{(k)} \}_{k \ge 1} \in \bigotimes_{k=1}^{\infty} \mathrm{H}^{\alpha}_k :\; \|\Gamma\|_{\mathcal{H}_{\xi}^{\alpha}} : = \sum_{k=1}^{\infty}\xi^k\|\gamma^{(k)}\|_{\mathrm{H}^{\alpha}_k} < \infty \Big \}.
\end{equation}
Evidently, $\mathcal{H}^{\alpha}_{\xi}$ is a Banach space equipped with the norm $\| \cdot \|_{\mathcal{H}_{\xi}^{\alpha}},$ which is introduced in \cite{CP1}. We remark that similar spaces are used in the isospectral renormalization group analysis of spectral problems in quantum field theory (see \cite{BCFS}).

\begin{definition}\label{df:MildSolution}
For $T >0,$ $\Gamma_t = \{ \gamma^{(k)}_t \}_{k \geq 1} \in C([0,T], \mathcal{H}_{\xi}^{\alpha})$ is said to be a local $(mild)$ solution
to the Gross-Pitaevskii hierarchy \eqref{eq:GPHierarchyEqua} if for every $k=1,2,\ldots,$
\be
\gamma^{(k)}_t = {\mathcal U}^{(k)}_0(t) \gamma^{(k)}_0 + \int^{t}_{0} d s\, {\mathcal U}^{(k)}_0 (t-s) \tilde{B}^{(k)} \gamma^{(k+1)}_s,\quad \forall t \in [0,T],
\ee
holds in $\mathrm{H}^{\alpha}_k.$

\end{definition}

Our main result in this paper is the following theorem.

\begin{theorem}\label{th:nge1}
Assume that $n \geq 1$ and $\alpha > n/2.$ Suppose $\Gamma_0 = \{\gamma^{(k)}_{0}\}_{k \geq 1}\in\mathcal{H}_{\xi}^{\alpha}$ for some $0<\xi<1.$ Then there exists a constant $C = C_{\alpha, n}$ depending only on $n$ and $\alpha$ such that, for a fixed $0 < T < \xi/C$ with $\eta = \xi - C T,$ the following hold.
\begin{enumerate}[{\rm (i)}]

\item There exists a solution $\Gamma_t = \{\gamma^{(k)}_t \}_{k \geq 1} \in C ( [0,T], \mathcal{H}_{\eta}^{\alpha})$ to the Gross-Pitaevskii hierarchy \eqref{eq:GPHierarchyEqua} with the initial data $\Gamma_0$ satisfying
\begin{equation}\label{eq:SpacetimeEstimate-nge1}
\| \Gamma_t \|_{C( [0,T],\mathcal{H}_{\eta}^{\alpha})} \leq \frac{\eta}{\xi}\| \Gamma_0 \|_{\mathcal{H}_{\xi}^{\alpha}}.
\end{equation}

\item For $T = \xi/(5 C),$ if $\Gamma_t$ and $\Gamma'_t$ in $C ( [0,T], \mathcal{H}_{\eta}^{\alpha})$ are two solutions to \eqref{eq:GPHierarchyEqua} with initial conditions $\Gamma_{t=0} = \Gamma_0$ and $\Gamma'_{t=0} = \Gamma'_0$ in $\mathcal{H}^{\alpha}_{\xi}$ respectively, then
\begin{equation}\label{eq:SpacetimeEstimate-Twosolution}
\| \Gamma_t - \Gamma'_t \|_{C( [0,T], \mathcal{H}_{\eta}^{\alpha})} \leq \frac{4}{5} \| \Gamma_0 - \Gamma'_0  \|_{\mathcal{H}_{\xi}^{\alpha}}.
\end{equation}
Consequently, the solution $\Gamma_t$ to the initial problem \eqref{eq:GPHierarchyEqua} with the initial data in $\mathcal{H}^{\alpha}_{\xi}$ is unique in $C( [0,T], \mathcal{H}_{\eta}^{\alpha})$ for any $0 < T < \xi/ C.$

\end{enumerate}
\end{theorem}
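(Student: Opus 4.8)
The plan is to build the solution directly as the sum of the fully iterated Duhamel series \eqref{eq:DuhamelExpan}, taking $m\to\infty$, and to run a Cauchy argument in $\mathcal H^\alpha_\eta$; no fixed-point map is needed. The only analytic ingredient is Lemma \ref{le:BOperatorEstimate-nge1}, which for $\alpha>n/2$ gives a bound of the form $\|\tilde B^{(k)}\rho^{(k+1)}\|_{\mathrm H^\alpha_k}\le C\,k\,\|\rho^{(k+1)}\|_{\mathrm H^\alpha_{k+1}}$, together with the isometry property $\|\mathcal U^{(k)}_0(t)\rho^{(k)}\|_{\mathrm H^\alpha_k}=\|\rho^{(k)}\|_{\mathrm H^\alpha_k}$. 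The first step is to estimate the generic summand of \eqref{eq:DuhamelExpan} carrying $j$ factors of $\tilde B$: the unitary propagators drop out, $j$ applications of the Lemma produce the factor $C^j\,k(k+1)\cdots(k+j-1)=C^j(k+j-1)!/(k-1)!$, and the integral over the time simplex $\{t>s_1>\cdots>s_j>0\}$ contributes $t^j/j!$. Summing over all $k\ge1$ and $j\ge0$ and weighting the $k$-th hierarchy level by $\eta^k$, the total of the $\mathcal H^\alpha_\eta$-norms of the terms in the fully expanded series is at most
\[
\sum_{k\ge1}\eta^k\sum_{j\ge0}\frac{(Ct)^j}{j!}\,\frac{(k+j-1)!}{(k-1)!}\,\|\gamma^{(k+j)}_0\|_{\mathrm H^\alpha_{k+j}}.
\]

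The second step is to collapse this double sum. Substituting $\ell=k+j$ turns the inner sum into $\eta\sum_{j=0}^{\ell-1}\binom{\ell-1}{j}\eta^{\ell-1-j}(Ct)^j=\eta(\eta+Ct)^{\ell-1}$ by the binomial theorem, so the whole expression equals $\eta\sum_{\ell\ge1}(\eta+Ct)^{\ell-1}\|\gamma^{(\ell)}_0\|_{\mathrm H^\alpha_\ell}$; since $t\le T$ and $\eta=\xi-CT$ we have $\eta+Ct\le\xi$, hence it is bounded by $(\eta/\xi)\|\Gamma_0\|_{\mathcal H^\alpha_\xi}$. This one computation yields everything in part (i): the series converges absolutely in $\mathcal H^\alpha_\eta$, uniformly for $t\in[0,T]$, so its partial sums form a Cauchy sequence in the complete space $C([0,T],\mathcal H^\alpha_\eta)$ (each term is continuous in $t$ by strong continuity of $e^{it\triangle^{(k)}_\pm}$ together with dominated convergence in $k$); the limit $\Gamma_t$ satisfies the bound \eqref{eq:SpacetimeEstimate-nge1}; and $\Gamma_t$ solves the integral equation of Definition \ref{df:MildSolution}, as one checks by substituting the series for $\gamma^{(k+1)}_s$ into $\int_0^t\mathcal U^{(k)}_0(t-s)\tilde B^{(k)}(\cdot)\,ds$ and interchanging sum and integral (legitimate by the absolute convergence just obtained), which reproduces the series for $\gamma^{(k)}_t$.

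For part (ii) the key observation is that the same estimate, applied to $\Gamma_0-\Gamma'_0$ and now with $T=\xi/(5C)$, $\eta=4\xi/5$, $\eta/\xi=4/5$, will give \eqref{eq:SpacetimeEstimate-Twosolution} \emph{once} we know that each component $\gamma^{(k)}_t$ of each solution coincides with its own Duhamel series. To see the latter, iterate the Duhamel formula $m$ times, writing $\gamma^{(k)}_t=\sum_{j=0}^{m-1}D^{(k)}_j[\Gamma_0](t)+R^{(k)}_m(t)$ as in \eqref{eq:DuhamelExpan} with $D^{(k)}_j$ linear in the initial data; bounding $\|\gamma^{(k+m)}_{s}\|_{\mathrm H^\alpha_{k+m}}\le\eta^{-(k+m)}\sup_{0\le s\le T}\|\Gamma_s\|_{\mathcal H^\alpha_\eta}$ and repeating the combinatorial/simplex estimate gives $\|R^{(k)}_m(t)\|_{\mathrm H^\alpha_k}\le C_k\,\binom{k+m-1}{m}\,(Ct/\eta)^m$ with $C_k$ independent of $m$; since $Ct/\eta\le CT/\eta=1/4<1$ and, for fixed $k$, $\binom{k+m-1}{m}$ is a polynomial in $m$, we get $R^{(k)}_m(t)\to0$ in $\mathrm H^\alpha_k$ as $m\to\infty$. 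Hence $\gamma^{(k)}_t=\sum_{j\ge0}D^{(k)}_j[\Gamma_0](t)$ and likewise for $\Gamma'$, so $\gamma^{(k)}_t-{\gamma'}^{(k)}_t=\sum_{j\ge0}D^{(k)}_j[\Gamma_0-\Gamma'_0](t)$; summing over $k$ with weights $\eta^k$ and applying the estimate of the second step gives \eqref{eq:SpacetimeEstimate-Twosolution}. In particular, two solutions with the same data agree on $[0,\xi/(5C)]$; uniqueness on an arbitrary $[0,T_0]$ with $T_0<\xi/C$ follows by subdividing $[0,T_0]$ into finitely many subintervals of length less than $(\xi-CT_0)/C$ — on each of them the solutions remain in $\mathcal H^\alpha_{\xi-CT_0}$, so the componentwise Duhamel-series representation applies and equal data at the left endpoint force equal solutions — and iterating.

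The point I expect to be delicate is the control of the remainder $R^{(k)}_m$: because the combinatorial weight $(k+j-1)!/(k-1)!\sim k^{j}$ grows with $k$, there is no way to make $\sum_k\eta^k\|R^{(k)}_m(t)\|_{\mathrm H^\alpha_k}$ small uniformly in $m$ by direct summation, so one is forced to argue componentwise and to exploit the strict inequality $CT<\eta$ — which is exactly why part (ii) is stated for the smaller time $T=\xi/(5C)$ rather than on the full interval of existence. Everything else — Lemma \ref{le:BOperatorEstimate-nge1} itself and the bookkeeping of the iterated Duhamel expansion — is routine.
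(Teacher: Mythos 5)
Your part (i) is essentially the paper's argument: the paper's $m$-th Picard iterate $\gamma^{(k)}_{m,t}$, once expanded, is exactly the $m$-term partial sum of the Duhamel series, and the combinatorial/simplex bookkeeping, the change of variable $\ell=k+j$, and the binomial identity are identical to what you wrote. The only cosmetic difference is that you speak of partial sums of a series where the paper speaks of Picard iterates.

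In part (ii) you take a genuinely different route from the paper. You argue \emph{componentwise}: for each fixed $k$, the remainder $R^{(k)}_m(t)$ of the $m$-fold Duhamel expansion of a given solution is bounded by $\binom{k+m-1}{m}(Ct/\eta)^m\,\eta^{-k}\sup_s\|\Gamma_s\|_{\mathcal H^\alpha_\eta}$; since $\binom{k+m-1}{m}=\binom{k+m-1}{k-1}$ is a polynomial of degree $k-1$ in $m$ and $Ct/\eta<1$, this vanishes as $m\to\infty$ for every fixed $k$, so each $\gamma^{(k)}_t$ coincides with its Duhamel series, and you then feed the difference $\Gamma_0-\Gamma'_0$ into the part-(i) estimate. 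The paper instead sums \emph{globally but only up to} $k=m$, replaces $k(k+1)\cdots(k+m-1)$ by $m(m+1)\cdots(2m-1)$ for $k\le m$, recognizes $\binom{2m-1}{m}(CT/\eta)^m$, and invokes Stirling together with the exact choice $CT/\eta=1/4$ to show this tends to zero as $1/\sqrt m$; letting $m\to\infty$ exhausts the $k$-sum. Your heuristic remark that a direct summation of $\sum_{k\ge1}\eta^k\|R^{(k)}_m\|$ cannot be made small uniformly in $m$ is correct if you sum over all $k$, but the paper circumvents this by truncating the $k$-sum at $k=m$ rather than by arguing componentwise; both workarounds are valid. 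Your version is more elementary (no Stirling) and actually works for any $T<\xi/(2C)$ rather than only $T=\xi/(5C)$; the paper's is more compact. Both then obtain uniqueness for general $T<\xi/C$ by the same (only sketched) time-subdivision argument.
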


\begin{remark}\label{rk:mainresult}
We will prove this theorem by the method of infinitely iterating the Duhamel series, and proving Cauchy convergence without additional conditions on spacestime bounds, and however, our argument won't work if $\alpha \le n/2.$ The assumption of $\alpha> n/2$ allows us to significantly simplify the approaches in the previous work \cite{CP1} and improve the corresponding one. In the much more difficult situation $\alpha\leq  \frac{n}{2},$ as done recently in \cite{CP2}, it is necessary to invoke the Strichartz estimates of the type introduced in the pioneering work of Klainerman-Machedon \cite{KM}.
\end{remark}

\section{Preliminary estimates}\label{PreEstimate}

In the sequel, we will mostly work in Fourier (momentum) space. Following \cite{ESY2}, we use the convention that variables $p,q,r, p', q', r'$
always refer to $n$ dimensional Fourier variables, while $x, x', y, y', z, z'$ denote the position space variables. With this convention, the usual hat indicating the Fourier transform will be omitted. For example, for $k \geq 1$ the kernel of a bounded operator $A$ on $L^2 ( \mathbb{R}^{k n} )$ in position space is $K (
\mathbf{x}_k ; \mathbf{x}'_k),$ then in the momentum space it is given by the Fourier transform
\be
K ( \mathbf{q}_k ; \mathbf{q}'_k) = \big \langle K, e^{-i \langle \cdot, \mathbf{q}_k \rangle}e^{ i \langle \cdot, \mathbf{q}'_k \rangle} \big \rangle = \int d \mathbf{x}_k d \mathbf{x}'_k K ( \mathbf{x}_k ; \mathbf{x}'_k ) e^{-i \langle \mathbf{x}_k, \mathbf{q}_k \rangle}e^{ i \langle \mathbf{x}'_k, \mathbf{q}'_k \rangle},
\ee
with the slight abuse of notation of omitting the hat on left hand side. Here,
\be
\langle \mathbf{x}_k, \mathbf{q}_k \rangle = \sum^k_{j=1} x_j \cdot q_j, \quad \forall \mathbf{x}_k = (x_1, \ldots , x_k), \mathbf{q}_k = (q_1, \ldots , q_k) \in \mathbb{R}^{k n}.
\ee


Thus, on kernels in the momentum space $B^{(k)}$ in \eqref{eq:BOperatorFunct} acts according to
\beq\label{eq:BOperatorFunctMomentum}
\begin{split}
\big [ B^{(k)}& \gamma^{(k+1)} \big ] ({\bf p}_k;{\bf p}'_k)\\
= & \sum^k_{j=1} \int d q_{k+1} d q'_{k+1}\\
&\;\; \times \Big \{ \gamma^{(k+1)}(p_1,\dotsc, p_j-q_{k+1}+q'_{k+1}, \dotsc, p_k,q_{k+1}; {\mathtt p}'_k, q'_{k+1})\\
& \quad - \gamma^{(k+1)} ({\bf p}_k, q_{k+1}; p'_1, \dotsc, p'_j + q_{k+1} - q'_{k+1}, \dotsc, p'_k, q'_{k+1} ) \Big \}\\
= & \sum^k_{j=1} \int d {\bf q}_{k+1} d {\bf q}'_{k+1} \Big [ \prod^k_{l \neq j} \delta(p_l - q_l) \delta (p'_l-q'_l) \Big ]\\
& \;\;\times \gamma^{(k+1)} ( {\bf q}_{k+1}; {\bf q}'_{k+1}) \Big \{ \delta(p'_j-q'_j) \delta \big ( p_j- [q_j+q_{k+1}-q'_{k+1} ] \big )\\
& \quad - \delta (p_j - q_j) \delta \big ( p'_j - [ q'_j+q'_{k+1}-q_{k+1} ] \big ) \Big \}.
\end{split}
\eeq


We begin with the following simple lemma.

\begin{lemma}\label{le:IntIneuqa-nge1}
If $\beta >n,$ then
\begin{equation}\label{eq:IntInequa-nge1}
\sup_{p \in \mathbb{R}^n} \int_{\mathbb{R}^n} \mathrm{d} q \mathrm{d} q' \dfrac{\langle p \rangle^{\beta}}{ \langle p + q' -q \rangle^{\beta} \langle q \rangle^{\beta} \langle q' \rangle^{\beta}}<\infty.
\end{equation}
\end{lemma}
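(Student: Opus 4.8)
The plan is to reduce \eqref{eq:IntInequa-nge1} to the elementary fact that
$C_\beta := \int_{\mathbb{R}^n} \langle q \rangle^{-\beta}\, dq < \infty$, which holds precisely because $\beta > n$, by splitting the numerator $\langle p \rangle^{\beta}$ so that it is absorbed into the three denominator factors one at a time. (Here the integral in \eqref{eq:IntInequa-nge1} is of course taken over $\mathbb{R}^n\times\mathbb{R}^n$ in the variables $q,q'$.)

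First I would use the identity $p = (p + q' - q) + q - q'$, which yields $|p| \le |p+q'-q| + |q| + |q'| \le 3\max\{|p+q'-q|,|q|,|q'|\}$ and hence, by monotonicity of $t \mapsto (1+t^2)^{\beta/2}$,
\[
\langle p \rangle^{\beta} \le 3^{\beta}\max\{\langle p+q'-q\rangle, \langle q\rangle, \langle q'\rangle\}^{\beta} \le 3^{\beta}\big(\langle p+q'-q\rangle^{\beta} + \langle q\rangle^{\beta} + \langle q'\rangle^{\beta}\big).
\]
Dividing through by $\langle p+q'-q\rangle^{\beta}\langle q\rangle^{\beta}\langle q'\rangle^{\beta}$ then bounds the integrand in \eqref{eq:IntInequa-nge1} by
\[
3^{\beta}\Big(\frac{1}{\langle q\rangle^{\beta}\langle q'\rangle^{\beta}} + \frac{1}{\langle p+q'-q\rangle^{\beta}\langle q'\rangle^{\beta}} + \frac{1}{\langle p+q'-q\rangle^{\beta}\langle q\rangle^{\beta}}\Big).
\]

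Next I would integrate the three terms separately over $\mathbb{R}^n \times \mathbb{R}^n$. By Fubini the first integrates to $C_\beta^2$. For the second I would fix $q'$ and substitute $u = p + q' - q$ (a translation in the $q$ variable, so $du = dq$), turning the inner integral into $\int_{\mathbb{R}^n}\langle u\rangle^{-\beta}\, du = C_\beta$, and the remaining $q'$-integral contributes a further $C_\beta$; the third term is identical with the roles of $q$ and $q'$ swapped. Summing, the left-hand side of \eqref{eq:IntInequa-nge1} is at most $3^{\beta+1}C_\beta^2$, a bound independent of $p$.

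There is really no serious obstacle here: the substance is the elementary ``split and match'' step, after which uniformity in $p$ comes for free from translation invariance of Lebesgue measure. The hypothesis $\beta > n$ is used exactly once, and is exactly what is needed, to guarantee $C_\beta < \infty$.
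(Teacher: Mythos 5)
Your proof is correct and is essentially the same as the paper's: both distribute $\langle p\rangle^{\beta}$ through the denominator via a Peetre-type triangle inequality and then integrate term by term using $\beta>n$ together with translation invariance of Lebesgue measure. The only (cosmetic) difference is that you perform a single three-way split of $p=(p+q'-q)+q-q'$, yielding three terms, whereas the paper applies the two-variable version of the inequality twice, producing a product of two binomial factors; your version is slightly more streamlined but rests on the same idea.
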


\begin{proof}
Since
\be\begin{split}
& \dfrac1{\langle p + q'-q \rangle^{\beta} \langle q \rangle^{\beta} \langle q' \rangle^{\beta}}\\
\leq & \dfrac{2^{\beta}}{\langle p + q' \rangle^{\beta} \langle q' \rangle^{\beta}} \Big( \dfrac1{\langle p + q' -q \rangle^{\beta}} + \dfrac1{ \langle q \rangle^{\beta }} \Big )\\
\le & \dfrac{2^{2 \beta}}{\langle p \rangle^{\beta} }\Big( \dfrac1{\langle p + q'\rangle^{\beta}} + \dfrac1{ \langle q' \rangle^{\beta }} \Big ) \Big( \dfrac1{\langle p + q' -q \rangle^{\beta}} + \dfrac1{ \langle q \rangle^{\beta }} \Big ),
\end{split}\ee
the inequality \eqref{eq:IntInequa-nge1} is concluded from the assumption $\beta > n.$
\end{proof}

As in \cite{KM}, we introduce for $\gamma^{(k+1)} ({\bf x}_{k+1},{\bf x}'_{k+1}) \in \mathcal{S} (\mathbb{R}^{(k+1) n} \times \mathbb{R}^{(k+1) n}),$
\begin{equation*}
\begin{split}
[ B_{j,k}^1 & \gamma^{(k+1)}]  ({\bf x}_{k},{\bf x}'_{k})\\
= & \int \mathrm{d} x_{k+1} \mathrm{d}x'_{k+1}
\delta(x_{k+1}-x'_{k+1}) \delta(x_j- x_{k+1}) \gamma^{(k+1)}({\bf
x}_{k+1},{\bf x}'_{k+1}),
\end{split}
\end{equation*}
and
\begin{equation*}
\begin{split}
[B_{j,k}^2 & \gamma^{(k+1)}] ({\bf x}_{k},{\bf x}'_{k})\\
= & \int \mathrm{d} x_{k+1} \mathrm{d} x'_{k+1}
\delta(x_{k+1}-x'_{k+1} ) \delta(x'_j - x_{k+1})\gamma^{(k+1)}({\bf
x}_{k+1},{\bf x}'_{k+1}),
\end{split}
\end{equation*}
where $j=1,\ldots, k.$ Then, by \eqref{eq:BOperatorFunct} we have
\be
B^{(k)} = \sum_{j=1}^k \,\big ( B_{j,k}^1 - B_{j,k}^2 \big )
\ee
acting on smooth kernel functions $\gamma^{(k+1)} \in \mathcal{S} (\mathbb{R}^{(k+1) n} \times \mathbb{R}^{(k+1) n}).$

The following estimate is crucial for the proof of Theorem \ref{th:nge1}. In fact, the proof of the lemma is completely analogous to that of $\|f\|_{L^{\infty}(\mathbb{R}^n)}\leq  C\|f\|_{H^{\alpha}(\mathbb{R}^n) }$  based on Fourier analysis.

\begin{lemma}\label{le:BOperatorEstimate-nge1}
Suppose that $\alpha>\dfrac{n}{2}$ and $n\geq 1.$ Then, there exists a constant $ C_{\alpha,n} >0$ depending only on $\alpha$ and $n$ such that, for any $\gamma^{(k+1)} \in \mathcal{S} (\mathbb{R}^{(k+1) n} \times \mathbb{R}^{(k+1) n}),$
\be
\| B_{j,k}^l \gamma^{(k+1)} \|_{\mathrm{H}^{\alpha}_k} \leq C_{\alpha,n} \|\gamma^{(k+1)} \|_{\mathrm{H}^{\alpha}_{k+1}},\quad l=1,2,
\ee
for all $k \geq 1,$ where $j=1,\cdots,k.$ Consequently,
\beq\label{eq:BoperatorEstimate-nge1}
\|B^{(k)} \gamma^{(k+1)} \|_{\mathrm{H}^{\alpha}_k} \leq C_{\alpha,n} k \|\gamma^{(k+1)}\|_{\mathrm{H}^{\alpha}_{k+1}}
\eeq
for any $\gamma^{(k+1)} \in \mathcal{S} (\mathbb{R}^{(k+1) n} \times \mathbb{R}^{(k+1) n})$ and all $k\geq 1.$
\end{lemma}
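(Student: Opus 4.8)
The plan is to work entirely on the momentum side, exploiting the explicit kernel formula \eqref{eq:BOperatorFunctMomentum}, and to reduce the estimate to the one-variable inequality of Lemma \ref{le:IntIneuqa-nge1}. First I would apply the operator $S^{(k,\alpha)}$ to $B_{j,k}^l\gamma^{(k+1)}$; on the Fourier side this multiplies the kernel by $\prod_{i=1}^k\langle p_i\rangle^\alpha\langle p'_i\rangle^\alpha$. For concreteness take $l=1$: from \eqref{eq:BOperatorFunctMomentum} the kernel of $B_{j,k}^1\gamma^{(k+1)}$ at $({\bf p}_k;{\bf p}'_k)$ is
\be
\int \mathrm{d} q_{k+1}\,\gamma^{(k+1)}\big(p_1,\dots,p_{j-1},p_j-q_{k+1}+q'_{k+1},\dots,p_k,q_{k+1};\,{\bf p}'_k,q'_{k+1}\big)\Big|_{q'_{k+1}\ \text{also integrated}},
\ee
i.e.\ one integrates the $(k{+}1)$-st incoming and outgoing momenta against the delta constraints. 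The only Sobolev weights that are not already "matched" between $B_{j,k}^1\gamma^{(k+1)}$ and $\gamma^{(k+1)}$ are the weight $\langle p_j\rangle^\alpha$ in the $j$-th slot (the shifted variable) and the two weights $\langle q_{k+1}\rangle^\alpha,\langle q'_{k+1}\rangle^\alpha$ in the traced-out slot of $\gamma^{(k+1)}$, which have disappeared.

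The key algebraic step is then to insert and remove these weights. Writing $\widetilde\gamma^{(k+1)}=S^{(k+1,\alpha)}\gamma^{(k+1)}$, one has pointwise
\be
\big|\langle p_j\rangle^\alpha\big|\le 2^{\alpha}\Big(\langle p_j-q_{k+1}+q'_{k+1}\rangle^\alpha+\langle q_{k+1}-q'_{k+1}\rangle^\alpha\Big)\le 4^{\alpha}\langle p_j-q_{k+1}+q'_{k+1}\rangle^\alpha\langle q_{k+1}\rangle^\alpha\langle q'_{k+1}\rangle^\alpha/\big(\text{positive factors}\big),
\ee
so that after multiplying by $1=\langle q_{k+1}\rangle^\alpha\langle q'_{k+1}\rangle^\alpha/(\langle q_{k+1}\rangle^\alpha\langle q'_{k+1}\rangle^\alpha)$ the integrand is bounded by
\be
\frac{C_\alpha}{\langle q_{k+1}\rangle^\alpha\langle q'_{k+1}\rangle^\alpha}\cdot\Big|\langle p_j-q_{k+1}+q'_{k+1}\rangle^\alpha\langle q_{k+1}\rangle^\alpha\langle q'_{k+1}\rangle^\alpha\!\!\prod_{i\ne j}\langle p_i\rangle^\alpha\langle p'_i\rangle^\alpha\,\gamma^{(k+1)}(\cdots)\Big|,
\ee
the bracketed quantity being exactly $\big|\widetilde\gamma^{(k+1)}\big|$ evaluated at the shifted arguments. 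Now I would take the $L^2$ norm in $({\bf p}_k;{\bf p}'_k)$ of the $q_{k+1},q'_{k+1}$-integral: by Cauchy–Schwarz in $(q_{k+1},q'_{k+1})$ against the weight $\langle q_{k+1}\rangle^{-2\alpha}\langle q'_{k+1}\rangle^{-2\alpha}$, using $2\alpha>n$ so that $\int\langle q\rangle^{-2\alpha}\,\mathrm{d}q<\infty$, one separates out a finite constant and is left with $\int \mathrm{d}q_{k+1}\mathrm{d}q'_{k+1}\big|\widetilde\gamma^{(k+1)}\big|^2$; integrating in the remaining $2kn$ variables and changing variables $p_j\mapsto p_j+q_{k+1}-q'_{k+1}$ (a measure-preserving shift, with the other $p_i$ playing no role) recovers precisely $\|\widetilde\gamma^{(k+1)}\|_{L^2}^2=\|\gamma^{(k+1)}\|_{\mathrm{H}^\alpha_{k+1}}^2$. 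This yields $\|B_{j,k}^1\gamma^{(k+1)}\|_{\mathrm{H}^\alpha_k}\le C_{\alpha,n}\|\gamma^{(k+1)}\|_{\mathrm{H}^\alpha_{k+1}}$ with $C_{\alpha,n}$ independent of $k$ and $j$; the term $B_{j,k}^2$ is handled identically with the roles of $p_j$ and $p'_j$ interchanged. Summing over $j=1,\dots,k$ and using $B^{(k)}=\sum_{j=1}^k(B_{j,k}^1-B_{j,k}^2)$ gives \eqref{eq:BoperatorEstimate-nge1} with the factor $k$.

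The step I expect to be the main (though modest) obstacle is bookkeeping: keeping the permutation of arguments straight in \eqref{eq:BOperatorFunctMomentum}, making sure the shift appears in exactly one slot and that all the "spectator" weights $\langle p_i\rangle^\alpha\langle p'_i\rangle^\alpha$ ($i\ne j$) pass through untouched, and verifying that the Cauchy–Schwarz/Fubini interchange and the final change of variables are legitimate — which is immediate for $\gamma^{(k+1)}\in\mathcal S$, hence the restriction to Schwartz kernels in the statement. The analytic heart is entirely contained in the two elementary facts $\langle a+b\rangle^\alpha\lesssim\langle a\rangle^\alpha\langle b\rangle^\alpha$ (Peetre-type) and $\int_{\mathbb R^n}\langle q\rangle^{-2\alpha}\mathrm{d}q<\infty$ for $\alpha>n/2$, exactly as in the textbook proof of $\|f\|_{L^\infty}\le C\|f\|_{\mathrm H^\alpha}$; no oscillatory or Strichartz input is needed, which is the whole point of assuming $\alpha>n/2$. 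One could alternatively phrase the weight redistribution through Lemma \ref{le:IntIneuqa-nge1} directly by putting the $\langle p\rangle^\beta$ in the numerator with $\beta=2\alpha>n$, but the Cauchy–Schwarz route above is cleaner and gives the uniform-in-$k$ constant transparently.
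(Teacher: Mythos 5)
Your overall scheme---pass to the momentum side, express the $S^{(k,\alpha)}$-weighted kernel of $B^l_{j,k}\gamma^{(k+1)}$ in terms of $\widetilde\gamma^{(k+1)}=S^{(k+1,\alpha)}\gamma^{(k+1)}$, apply Cauchy--Schwarz in $(q_{k+1},q'_{k+1})$, then undo a shift of variables to recover $\|\widetilde\gamma^{(k+1)}\|_{L^2}$---is the same as the paper's. But the pointwise bound you use to ``insert and remove weights'' is false, and the Peetre step does not produce the decaying factor that your Cauchy--Schwarz needs.

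Concretely, for $l=1$, $j=1$ the exact algebraic identity is
\begin{equation*}
\prod_{i=1}^k\langle p_i\rangle^\alpha\langle p'_i\rangle^\alpha\,\gamma^{(k+1)}(p_1+q'-q,p_2,\dots,p_k,q;\mathbf p'_k,q')
=\frac{\langle p_1\rangle^\alpha}{\langle p_1+q'-q\rangle^\alpha\langle q\rangle^\alpha\langle q'\rangle^\alpha}\,\widetilde\gamma^{(k+1)}(\cdots),
\end{equation*}
and you then assert that the prefactor is $\le C_\alpha\,\langle q\rangle^{-\alpha}\langle q'\rangle^{-\alpha}$. That is equivalent to $\langle p_1\rangle^\alpha\le C_\alpha\langle p_1+q'-q\rangle^\alpha$ uniformly in $(q,q')$, which fails (take $p_1=q-q'$ with $|p_1|\to\infty$). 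What the Peetre inequality $\langle p_1\rangle^\alpha\lesssim\langle p_1+q'-q\rangle^\alpha\langle q\rangle^\alpha\langle q'\rangle^\alpha$ actually yields is only that the prefactor is $\lesssim 1$: the two $\langle q\rangle^\alpha\langle q'\rangle^\alpha$ factors cancel exactly, leaving you with $\lesssim|\widetilde\gamma^{(k+1)}|$ and no integrable weight in $(q,q')$. If you then try to reinsert $\langle q\rangle^{-\alpha}\langle q'\rangle^{-\alpha}$ for Cauchy--Schwarz, you are forced to pay an extra $\langle q\rangle^{2\alpha}\langle q'\rangle^{2\alpha}$ inside the square, which would bound $B^{(k)}$ only into a weaker space in exchange for $H^{2\alpha}$-type control in the traced slot of $\gamma^{(k+1)}$.

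The cure is the route you dismiss as an ``alternative'' at the end, and it is in fact the paper's proof and the only one that closes: do \emph{not} pre-process with Peetre; keep the exact prefactor, apply Cauchy--Schwarz in $(q,q')$ so that the constant you must control is exactly
\begin{equation*}
\sup_{p_1\in\mathbb R^n}\int\mathrm dq\,\mathrm dq'\,\frac{\langle p_1\rangle^{2\alpha}}{\langle p_1+q'-q\rangle^{2\alpha}\langle q\rangle^{2\alpha}\langle q'\rangle^{2\alpha}},
\end{equation*}
and invoke Lemma~\ref{le:IntIneuqa-nge1} with $\beta=2\alpha>n$. Note that the lemma is \emph{not} a one-line Peetre estimate: its proof uses the elementary identity $1/(ab)=(1/(a+b))(1/a+1/b)$ twice to split the integrand into pieces each integrable in $(q,q')$, exploiting that $\langle p_1\rangle^{2\alpha}$ cannot dominate all three denominator factors simultaneously. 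That splitting is precisely the content your Peetre shortcut throws away. Once Lemma~\ref{le:IntIneuqa-nge1} is used, the rest of your write-up (spectator weights passing through, the change of variables $p_j\mapsto p_j+q-q'$, the uniform-in-$k$ constant, the sum over $j$ giving the factor $k$) is correct and matches the paper.
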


\begin{remark}\label{rk:BOperatorDf}
The estimate \eqref{eq:BoperatorEstimate-nge1} indicates that the operator $B^{(k)},$ originally defined on Schwarz functions, can be extended to a
bounded operator from $\mathrm{H}^{\alpha}_{k+1}$ to $\mathrm{H}^{\alpha}_k.$ In this case, we still denote it by $B^{(k)}.$
\end{remark}

\begin{proof}
We first consider $B^1_{1,k}.$ For $\gamma^{(k+1)} \in \mathcal{S} (\mathbb{R}^{(k+1) n} \times \mathbb{R}^{(k+1) n})$, from the Plancherel's theorem it is concluded that
\begin{equation*}
\begin{split}
\|B_{1,k}^1 \gamma^{(k+1)}\|_{\mathrm{H}^{\alpha}_k}^2 = & \int \prod_{j=1}^k\langle p_j\rangle^{2 \alpha}\langle p'_j \rangle^{2
\alpha} \mathrm{d}{\bf p}_k \mathrm{d} {\bf p}'_k \Big | \int \mathrm{d}q\mathrm{d}q'\\
& \gamma^{(k+1)}(p_1+q'-q,p_2, \cdots,p_k,q;p'_1,\cdots,p'_k,q') \Big|^2.
\end{split}
\end{equation*}
Then, we obtain
\begin{equation*}
\begin{split}
\|B_{1,k}^1 & \gamma^{(k+1)}\|_{\mathrm{H}^{\alpha}_k}^2\\
\leq & \int \prod_{j=1}^k\langle p_j\rangle^{2\alpha}\langle p'_j\rangle^{2\alpha} \mathrm{d}{\bf p}_k\mathrm{d}{\bf p}'_k\\
& \times \Big ( \int \mathrm{d} q \mathrm{d} q' \dfrac{1}{\langle p_1+q'-q\rangle^{2\alpha}\langle q\rangle^{2\alpha}\langle q' \rangle^{2\alpha}} \Big )\\
& \times \Big ( \int \mathrm{d}q\mathrm{d}q'\langle p_1+q'-q\rangle^{2\alpha}\langle q\rangle^{2\alpha}\langle q'\rangle^{2\alpha}\\
& \times | \gamma^{(k+1)} (p_1+q'-q,p_2, \cdots, p_k,q;p'_1,\cdots,p'_k,q')|^2 \Big )\\
\leq & \sup_{p_1 \in \mathbb{R}^n} \int \mathrm{d} q \mathrm{d}q' \dfrac{\langle p_1 \rangle^{2 \alpha}}{\langle p_1 + q'-q \rangle^{2\alpha} \langle q \rangle^{2\alpha} \langle q'\rangle^{2\alpha}}\\
& \times \int \prod_{j = 2}^k \langle p_j \rangle^{2\alpha} \langle p'_j \rangle^{2 \alpha} \mathrm{d} {\bf p}_k\mathrm{d} {\bf p}'_k \mathrm{d} q \mathrm{d} q'\\
& \times \Big \{ \langle p_1 +q'-q \rangle^{2\alpha} \langle p'_1 \rangle^{2 \alpha} \langle q \rangle^{2\alpha}\langle q'\rangle^{2\alpha}\\
& \times | \gamma^{(k+1)} (p_1+q'-q, p_2, \cdots, p_k,q; p'_1,\cdots,p'_k,q') |^2 \Big \}\\
= & \sup_{p_1 \in \mathbb{R}^n} \int \mathrm{d} q \mathrm{d} q' \dfrac{\langle p_1 \rangle^{2\alpha}}{\langle p_1 + q' - q \rangle^{2\alpha} \langle q \rangle^{2 \alpha} \langle q'\rangle^{2\alpha}} \| \gamma^{(k+1)}\|_{\mathrm{H}^{\alpha}_{k+1}}^2\\
\leq & C_{\alpha, n} \| \gamma^{(k+1)} \|^2_{\mathrm{H}^{\alpha}_{k+1}},
\end{split}
\end{equation*}
where we have used Lemma \ref{le:IntIneuqa-nge1} in the last inequality. For the operator $B_{1,k}^2,$ we have the same estimate
\begin{equation*}
\|B_{1,k}^2\gamma^{(k+1)}\|_{\mathrm{H}^{\alpha}_k} \leq C_{\alpha,n} \|\gamma^{(k+1)}\|_{\mathrm{H}^{\alpha}_{k+1}}.
\end{equation*}
Similarly, we can prove the same bound for $B_{j,k}^1$ and $B_{j,k}^2$ when $j=2,\cdots, k.$ Consequently, we conclude the estimate \eqref{eq:BoperatorEstimate-nge1}.
\end{proof}

\section{Proof of Theorem \ref{th:nge1}}\label{ProofTh-nge1}

Now we are ready to prove Theorem \ref{th:nge1}. The proof is divided into two parts as follows.

\begin{proof}
(i)\; Let $\alpha > n/2$ and $0< \xi < 1.$ Given $\Gamma_0 = \{ \gamma^{(k)}_0 \}_{k \ge 1} \in \mathcal{H}^{\alpha}_{\xi}.$ For $m \geq 1,$ set
\begin{equation}\label{eq:m-iterate}
\gamma^{(k)}_{m,t} = {\mathcal U}^{(k)}_0 (t) \gamma^{(k)}_0 + \int^{t}_{0} \mathrm{d} s\, {\mathcal U}^{(k)}_0 (t-s) \tilde{B}^{(k)} \gamma^{(k+1)}_{m-1,s},\quad t>0, k \geq 1,
\end{equation}
with the convention $\gamma^{(k)}_{0,t} \equiv \gamma^{(k)}_{0},$ where $\tilde{B}^{(k)} = -i \mu B^{(k)}$ (e.g., \eqref{eq:GPHierarchyIntEqua}). By expansion, for every $m \geq 1$ one has
\begin{equation*}
\begin{split}
\gamma^{(k)}_{m,t}= & {\mathcal U}^{(k)}_0(t) \gamma^{(k)}_{0} +
\sum_{j=1}^{m-1}\int_{0}^{t}\mathrm{d}t_1\int_0^{t_1}\mathrm{d}t_2
\cdots \int_0^{t_{j-1}}\mathrm{d}
t_j {\mathcal U}^{(k)}_0(t-t_1) \tilde{B}^{(k)} \cdots\\
& \times {\mathcal U}^{(k+j-1)}_0(t_{j-1}-t_j) \tilde{B}^{(k+j-1)}
{\mathcal U}^{(k+j)}_0(t_j) \gamma^{(k+j)}_0\\
& + \int_{0}^{t} \mathrm{d} t_1 \int_0^{t_1} \mathrm{d} t_2 \cdots
\int_0^{t_{m-1}} \mathrm{d} t_m {\mathcal U}^{(k)}_0 (t-t_1)
\tilde{B}^{(k)} \cdots\\
&\times {\mathcal U}^{(k+m-1)}_0 (t_{m-1}-t_m) \tilde{B}^{(k+m-1)}
\gamma^{(k+m)}_{0}\\
\triangleq & \sum_{j=0}^{m} \Xi_{j,t}^{(k)},
\end{split}
\end{equation*}
with the convention $t_0 =t.$ Then, for $j=1,\cdots,m-1$, by Lemma \ref{le:BOperatorEstimate-nge1} we
have
\beq\label{eq:XiEsitmate}
\begin{split}
\| \Xi_{j,t}^{(k)} \|_{\mathrm{H}^{\alpha}_k} \leq & \int_0^t \mathrm{d} t_1 \int_0^{t_1} \mathrm{d} t_2 \cdots \int_0^{t_{j-1}}
\mathrm{d} t_j \Big \|{\mathcal U}^{(k)}_0(t-t_1) \tilde{B}^{(k)} \cdots \\
& \times {\mathcal U}^{(k+j-1)}_0(t_{j-1}-t_j) \tilde{B}^{(k+j-1)} {\mathcal U}^{(k+j)}_0(t_j) \gamma^{(k+j)}_{0} \Big \|_{\mathrm{H}^{\alpha}_k}\\
\leq & \int_{0}^{t} \mathrm{d} t_1 \int_0^{t_1} \mathrm{d}t_2 \cdots \int_0^{t_{j-1}} \mathrm{d} t_j k \cdots \\
& \times (k+j-1)(C_{\alpha,n})^{j} \| {\mathcal U}^{(k+j)}_0(t_j)\gamma^{(k+j)}_{0}\|_{\mathrm{H}^{\alpha}_{k+j}}\\
= & \dfrac{t^j}{j!} k \cdots (k+j-1) (C_{\alpha,n})^j \|\gamma^{(k+j)}_{0}\|_{\mathrm{H}^{\alpha}_{k+j}}\\
= & \binom{k+j-1}{j} (C_{\alpha,n}  t )^j\|\gamma^{(k+j)}_{0}\|_{\mathrm{H}^{\alpha}_{k+j}},
\end{split}
\eeq
and
\begin{equation*}
\begin{split}
\| \Xi_{m,t}^{(k)}\|_{\mathrm{H}^{\alpha}_k} \leq & \int_{0}^{t} \mathrm{d} t_1 \int_0^{t_1} \mathrm{d} t_2 \cdots \int_0^{t_{m-1}} \mathrm{d} t_m \Big \| {\mathcal U}^{(k)}_0(t-t_1) \tilde{B}^{(k)} \cdots \\
& \times {\mathcal U}^{(k+m-1 )}_0(t_{m-1}-t_m) \tilde{B}^{(k+m-1)} \gamma^{(k+m)}_0 \Big \|_{\mathrm{H}^{\alpha}_k}\\
\leq & \int_{0}^{t}\mathrm{d}t_1\int_0^{t_1}\mathrm{d}t_2\cdots\int_0^{t_{m-1}}\mathrm{d} t_m~ k \cdots \\
& \times (k+m-1) (C_{\alpha,n})^{m} \| \gamma^{(k+m)}_0 \|_{\mathrm{H}^{\alpha}_{k+m}}\\
\le & \dfrac{t^m}{m!} k \cdots (k+m-1)(C_{\alpha,n})^{m} \|\gamma^{(k+m)}_0\|_{\mathrm{H}^{\alpha}_{k+m}}\\
= & \binom{k+ m-1}{m} (C_{\alpha,n}  t )^m \|\gamma^{(k+m)}_0 \|_{\mathrm{H}^{\alpha}_{k+m}}.
\end{split}
\end{equation*}
Then, for $T > 0$ ($T$ will be fixed in the sequel) we obtain
\be
\begin{split}
\|\gamma^{(k)}_{m,t} \|_{C([0,T], \mathrm{H}^{\alpha}_k)} & \le \sum_{j=0}^{m} \|\Xi_{j}^{(k)}\|_{C([0,T], \mathrm{H}^{\alpha}_k)}\\
& \leq  \sum_{j = 0}^m \binom{k+j-1}{j} (C_{\alpha,n}  T )^j\|\gamma^{(k+j)}_{0}\|_{\mathrm{H}^{\alpha}_{k+j}}\\
\end{split}
\ee
Hence, for $0<\eta<1$ (which will be fixed later) one has
\begin{equation}\label{eq:GammaSum}
\begin{split}
\sum_{k=1}^{\8} & \eta^k \|\gamma^{(k)}_{m,t} \|_{C([0,T], \mathrm{H}^{\alpha}_k)}\\
& \leq  \sum_{j = 0}^m \sum_{k=1}^{\8} \eta^k\binom{k+j-1}{j} (C_{\alpha,n}  T )^j\|\gamma^{(k+j)}_{0}\|_{\mathrm{H}^{\alpha}_{k+j}}\\
& \leq \sum_{j = 0}^{\8} \sum_{k=1}^{\8} \eta^k\binom{k+j-1}{j} (C_{\alpha,n}  T )^j\|\gamma^{(k+j)}_{0}\|_{\mathrm{H}^{\alpha}_{k+j}}.\\
\end{split}
\end{equation}
By the direct computation, one has
\be
\begin{split}
\sum_{j=0}^{\infty}\sum_{k=1}^{\infty} & \eta^k\binom{k+j-1}{j} (C_{\alpha,n}  T )^j\|\gamma^{(k+j)}_{0}\|_{\mathrm{H}^{\alpha}_{k+j}}\\
=&\sum_{j=0}^{\infty}\sum_{l=j+1}^{\infty}\eta^{l-j}\binom{l-1}{j}(C_{\alpha,n}  T )^j\|\gamma^{(l)}_{0}\|_{\mathrm{H}^{\alpha}_l}\\
=&\sum_{l=1}^{\infty}\sum_{j=0}^{l-1}\binom{l-1}{j}(C_{\alpha,n}  T/\eta )^j\eta^l\|\gamma^{(l)}_{0}\|_{\mathrm{H}^{\alpha}_l}\\
=&\sum_{l=1}^{\infty}(1+C_{\alpha,n}  T/\eta )^{l-1}\eta^l\|\gamma^{(l)}_{0}\|_{\mathrm{H}^{\alpha}_l}\\
=&\dfrac{\eta}{\eta+C_{\alpha,n}  T}\sum_{l=1}^{\infty}(\eta+C_{\alpha,n}  T)^{l}\|\gamma^{(l)}_{0}\|_{\mathrm{H}^{\alpha}_l}.
\end{split}
\ee
Set $\Gamma_{m,t} = \{ \gamma^{(k)}_{m,t} \}.$ Let $\eta=\xi-C_{\alpha,n}T$ with $0<T<\xi/ C_{\alpha,n}.$ Then, it follows from \eqref{eq:GammaSum} that
\beq\label{eq:SpacetimeEstimate_m-itera}
\| \Gamma_{m,t} \|_{C( [0,T], \mathcal{H}_{\eta}^{\alpha})} \leq \dfrac{\eta}{\xi}\sum_{l=1}^{\infty}\xi ^{l}\|\gamma^{(l)}_{0}\|_{\mathrm{H}^{\alpha}_l}= \dfrac{\eta}{\xi}\|\Gamma_0\|_ {\mathcal{H}_{\xi}^{\alpha}}.
\eeq

Next, we prove that $\{\Gamma_{m,t}\}_{m \ge 1}$ converges to a solution. Indeed, by the above estimates for $\Xi_{j,t}^{(k)}$ we have for any $m,n$ with $n >m$
\be
\|\gamma^{(k)}_{m,t} - \gamma^{(k)}_{n,t}\|_{C([0,T], \mathrm{H}^{\alpha}_k)} \le 2 \sum_{j = m}^{n} \binom{k+j-1}{j} (C_{\alpha,n}  T )^j\|\gamma^{(k+j)}_{0}\|_{\mathrm{H}^{\alpha}_{k+j}}.
\ee
Then,
\be
\| \Gamma_{m,t} - \Gamma_{n,t} \|_{C_{t \in [0,T]} \mathcal{H}_{\eta}^{\alpha}} \le 2 \sum_{j = m}^{n} \sum_{k=1}^{\infty}\eta^k\binom{k+j-1}{j} (C_{\alpha,n}  T )^j\|\gamma^{(k+j)}_{0}\|_{\mathrm{H}^{\alpha}_{k+j}}.
\ee
An immediate computation as above yields that
\be
\| \Gamma_{m,t} - \Gamma_{n,t} \|_{C( [0,T], \mathcal{H}_{\eta}^{\alpha})} \le 2 \dfrac{\eta}{\xi}\sum_{l=m+1}^{\infty}\xi ^{l}\|\gamma^{(l)}_{0}\|_{\mathrm{H}^{\alpha}_l}.
\ee
Since $\Gamma_0 = \{ \gamma^{(k)}_0 \}_{k \ge 1} \in \mathcal{H}^{\alpha}_{\xi},$ it is concluded that $\{  \Gamma_{m,t} \}_{m \ge 1}$ is a Cauchy sequence in $C( [0,T], \mathcal{H}_{\eta}^{\alpha})$ and so converges to some $\Gamma_t \in C([0,T], \mathcal{H}_{\eta}^{\alpha}).$ Taking the limit $m \to \8$ in \eqref{eq:m-iterate} we prove that $\Gamma_t$ is a solution to \eqref{eq:GPHierarchyEqua}. Also, taking $m \to \8$ in \eqref{eq:SpacetimeEstimate_m-itera} yields this solution satisfies \eqref{eq:SpacetimeEstimate-nge1}.

(ii)\; Choose $T = \xi/(5 C)$ and suppose $\Gamma_t, \Gamma'_t \in C( [0,T], \mathcal{H}^{\alpha}_{\eta})$ are two solutions to \eqref{eq:GPHierarchyEqua} with the initial datum $\Gamma_0$ and $\Gamma'_0$ in $\mathcal{H}^{\alpha}_{\xi},$ respectively. Since \eqref{eq:GPHierarchyEqua} is linear, it suffices to consider $\Gamma_t$ instead of $\Gamma_t-\Gamma'_t.$
By \eqref{eq:DuhamelExpan}, for every $m \geq 1$ one has
\begin{equation*}
\begin{split}
\gamma^{(k)}_t = & {\mathcal U}^{(k)}_0(t) \gamma^{(k)}_{0} +
\sum_{j=1}^{m-1}\int_{0}^{t}\mathrm{d}t_1\int_0^{t_1}\mathrm{d}t_2
\cdots \int_0^{t_{j-1}}\mathrm{d}
t_j {\mathcal U}^{(k)}_0(t-t_1) \tilde{B}^{(k)} \cdots\\
& \times {\mathcal U}^{(k+j-1)}_0(t_{j-1}-t_j) \tilde{B}^{(k+j-1)}
{\mathcal U}^{(k+j)}_0(t_j) \gamma^{(k+j)}_0\\
& + \int_{0}^{t} \mathrm{d} t_1 \int_0^{t_1} \mathrm{d} t_2 \cdots
\int_0^{t_{m-1}} \mathrm{d} t_m {\mathcal U}^{(k)}_0 (t-t_1)
\tilde{B}^{(k)} \cdots\\
&\times {\mathcal U}^{(k+m-1)}_0 (t_{m-1}-t_m) \tilde{B}^{(k+m-1)}
\gamma^{(k+m)}_{t_m}\\
\triangleq & \sum_{j=0}^{m-1} \Xi_{j,t}^{(k)} + \tilde{\Xi}_{m,t}^{(k)},
\end{split}
\end{equation*}
with the convention $t_0 =t.$ Note that,
\begin{equation*}
\begin{split}
\| \tilde{\Xi}_{m,t}^{(k)}\|_{\mathrm{H}^{\alpha}_k} \leq & \int_{0}^{t} \mathrm{d} t_1 \int_0^{t_1} \mathrm{d} t_2 \cdots \int_0^{t_{m-1}} \mathrm{d} t_m \Big \| {\mathcal U}^{(k)}_0(t-t_1) \tilde{B}^{(k)} \cdots \\
& \times {\mathcal U}^{(k+m-1 )}_0(t_{m-1}-t_m) \tilde{B}^{(k+m-1)} \gamma^{(k+m)}_{t_m} \Big \|_{\mathrm{H}^{\alpha}_k}\\
\leq & \int_{0}^{t}\mathrm{d}t_1\int_0^{t_1}\mathrm{d}t_2\cdots\int_0^{t_{m-1}}\mathrm{d} t_m~ k \cdots \\
& \quad \times (k+m-1) (C_{\alpha,n})^{m} \| \gamma^{(k+m)}_{t_m} \|_{\mathrm{H}^{\alpha}_{k+m}}\\
\leq &  m (m+1) \cdots (2m-1) (C_{\alpha,n})^{m} \\
& \quad \times \int_{0}^{t}\mathrm{d}t_1\int_0^{t_1}\mathrm{d}t_2\cdots\int_0^{t_{m-1}} \| \gamma^{(k+m)}_{t_m} \|_{\mathrm{H}^{\alpha}_{k+m}} \mathrm{d} t_m .\\
\end{split}
\end{equation*}
Combining this estimate and \eqref{eq:XiEsitmate} yields
\be
\begin{split}
\|\gamma^{(k)}_t \|_{\mathrm{H}^{\alpha}_k}
\leq &  \sum_{j = 0}^{m-1} \binom{k+j-1}{j} (C_{\alpha,n}  T )^j\|\gamma^{(k+j)}_{0}\|_{\mathrm{H}^{\alpha}_{k+j}}\\
& \; + m (m+1) \cdots (2m-1) (C_{\alpha,n})^{m} \\
& \quad \times \int_{0}^{t}\mathrm{d}t_1\int_0^{t_1}\mathrm{d}t_2\cdots\int_0^{t_{m-1}} \| \gamma^{(k+m)}_{t_m} \|_{\mathrm{H}^{\alpha}_{k+m}} \mathrm{d} t_m .
\end{split}
\ee
Then for $m \ge 1$ and $0 < t < T,$ we have
\begin{equation}\label{eq:GammaSum_m-itera}
\begin{split}
\sum_{k=1}^{m} & \eta^k \|\gamma^{(k)}_t \|_{\mathrm{H}^{\alpha}_k}\\
& \leq  \sum_{j = 0}^{m-1} \sum_{k=1}^m \binom{k+j-1}{j} (C_{\alpha,n} T/ \eta )^j \eta^{k+j} \|\gamma^{(k+j)}_{0}\|_{\mathrm{H}^{\alpha}_{k+j}}\\
& \quad + m (m+1) \cdots (2m-1) \Big (\frac{ C_{\alpha,n} }{\eta} \Big )^{m} \\
& \quad \; \times \int_{0}^{t}\mathrm{d}t_1\int_0^{t_1}\mathrm{d}t_2\cdots\int_0^{t_{m-1}} \sum^m_{k=1} \eta^{(k+m)}\| \gamma^{(k+m)}_{t_m} \|_{\mathrm{H}^{\alpha}_{k+m}} \mathrm{d} t_m\\
& \le \dfrac{\eta}{\xi} \|\Gamma_0\|_ {\mathcal{H}_{\xi}^{\alpha}} + \Big (\frac{ C_{\alpha,n} t}{\eta} \Big )^{m} \binom{2m-1}{m} \| \Gamma_t \|_{C( [0,T], \mathcal{H}^{\alpha}_{\eta})}\\
& \leq \dfrac{4}{5} \|\Gamma_0\|_ {\mathcal{H}_{\xi}^{\alpha}} + \binom{2m-1}{m} \Big (\frac{ C_{\alpha,n} T}{\eta} \Big )^{m} \| \Gamma_t \|_{C( [0,T], \mathcal{H}^{\alpha}_{\eta})}.
\end{split}
\end{equation}
By Stirling's formula $m! \approx m^{m+ 1/2} e^{-m}$ we have
\be
\binom{2m-1}{m} \approx \frac{4^m}{\sqrt{m}}.
\ee
Then, taking $m \to \8$ in \eqref{eq:GammaSum_m-itera} yields that
\be
\| \Gamma_t \|_{C( [0,T], \mathcal{H}_{\eta}^{\alpha})} \leq \dfrac{4}{5} \|\Gamma_0\|_ {\mathcal{H}_{\xi}^{\alpha}},
\ee
because $C_{\alpha, n} T/\eta = 1/4.$ Thus, the space-time type bound \eqref{eq:SpacetimeEstimate-Twosolution} holds true.

The uniqueness of the solution in $C( [0,T], \mathcal{H}_{\eta}^{\alpha})$ follows clearly from the estimate \eqref{eq:SpacetimeEstimate-Twosolution}.
\end{proof}

\section{The quintic Gross--Pitaevskii hierarchy}\label{GPHierarchyQuintic}

In this section, we consider the so-called quintic Gross--Pitaevskii hierarchy. Recall that the quintic Gross--Pitaevskii hierarchy is given by
\beq\label{eq:GPHierarchyEquaQuintic}
i \partial_t \gamma^{(k)}_t = \big [- \Delta^{(k)}, \gamma^{(k)}_t \big ] + \mu Q^{(k)} \gamma^{(k+2)}_t,~~\Delta^{(k)} = \sum^k_{j=1} \Delta_{x_j},\; \mu = \pm 1,
\eeq
in $n$ dimensions, for $k \in \mathbb{N},$ where the operator $Q^{(k)}$ is defined by
\be
Q^{(k)} \gamma^{(k+2)}_t = \sum^k_{j=1} \mathrm{t r}_{k+1, k+2} \big [ \delta (x_j - x_{k+1}) \delta (x_j - x_{k+2}), \gamma^{(k+2)}_t \big ].
\ee
It is {\it defocusing} if $\mu =1,$ and {\it focusing} if $\mu= -1.$ We note that the quintic Gross--Pitaevskii hierarchy accounts for $3$-body interactions between the Bose particles (see \cite{CP3} and references therein for details).

\begin{remark}\label{rk:GPQuintic}
In terms of kernel functions we can rewrite \eqref{eq:GPHierarchyEquaQuintic} as follows
\beq\label{eq:GPHierarchyEquaFunctQuintic}
\big ( i \partial_t + \triangle^{(k)}_{\pm} \big ) \gamma^{(k)}_t ({\bf x}_k;{\bf x}'_k) = \mu \big ( Q^{(k)} \gamma^{(k+2)}_t \big ) ({\bf x}_k; {\bf x}'_k ),
\eeq
where, the action of $Q^{(k)}$ on $\gamma^{(k+2)} ({\bf x}_{k+2}, {\bf x}'_{k+2}) \in \mathcal{S} (\mathbb{R}^{(k+2)n} \times \mathbb{R}^{(k+2)n})$ is given by
\be\begin{split}
\big ( Q^{(k)} & \gamma^{(k+2)} \big ) ({\bf x}_k, {\bf x}'_k )\\
& : = \sum^k_{j=1} \big ( Q_{j, k} \gamma^{(k+2)} \big ) ({\bf x}_k, {\bf x}'_k)\\
& : = \sum^k_{j=1} \int d x_{k+1} d x_{k+2} d x'_{k+1} d x'_{k + 2} \gamma^{(k+2)} ({\bf x}_k, x_{k+1}, x_{k+2}; {\bf x}'_k, x'_{k+1}, x'_{k+2}) \\
&\quad \times \Big [ \prod^{k+2}_{\el=k+1} \delta (x_j - x_{\el}) \delta (x_j - x'_{\el}) - \prod^{k+2}_{\el=k+1} \delta (x'_j - x_{\el}) \delta (x'_j - x'_{\el})\Big ].
\end{split}\ee

Let $\varphi \in \mathrm{H}^1(\mathbb{R}^n),$ then one can easily verify that a particular solution to \eqref{eq:GPHierarchyEquaFunctQuintic} with initial conditions
\be
\gamma^{(k)}_{t=0}({\bf x}_k; {\bf x}'_k) = \prod^k_{j=1} \varphi(x_j) \overline{\varphi(x'_j)},\quad k=1,2,\ldots,
\ee
is given by
\be
\gamma^{(k)}_t ({\bf x}_{k}; {\bf x}'_{k} ) = \prod^k_{j=1} \varphi_t (x_j) \overline{\varphi_t ( x'_j )}~~k=1,2,\ldots,
\ee
where $\varphi_t$ satisfies the quintic non-linear Schr\"odinger equation
\beq\label{eq:GPEquaQuintic}
i\partial_t \varphi_t = -\Delta \varphi_t + \mu |\varphi_t|^4 \varphi_t,\quad \varphi_{t=0}=\varphi.
\eeq
\end{remark}

\begin{remark}\label{rk:DuhamelExpansionQuintic}
The Gross-Pitaevskii hierarchy \eqref{eq:GPHierarchyEquaQuintic} can be written in the integral form
\beq\label{eq:GPHierarchyIntEquaQuintic}
\gamma^{(k)}_t = {\mathcal U}^{(k)}_0(t) \gamma^{(k)}_0 + \int^{t}_{0} d s~ {\mathcal U}^{(k)}_0 (t-s) \tilde{Q}^{(k)} \gamma^{(k+2)}_s,\; k=1,2,\ldots,
\eeq
where $\tilde{Q}^{(k)} = - i \mu Q^{(k)}.$ Formally we can expand the solution $\gamma^{(k)}_t$ of \eqref{eq:GPHierarchyIntEquaQuintic}
for any $m \geq 1$ as
\beq\label{eq:DuhamelExpanQuintic}
\begin{split}
\gamma^{(k)}_t & = {\mathcal U}^{(k)}_0 (t) \gamma^{(k)}_0 + \sum^{m-1}_{j=1} \int^t_0 d s_1 \int^{s_1}_0 d s_2 \cdots
\int^{s_{j-1}}_0 d s_j {\mathcal U}^{(k)}_0(t-s_1) \tilde{Q}^{(k)} \cdots\\
& \; \times {\mathcal U}^{(k+j-1)}_0 ( s_{j-1} - s_j ) \tilde{Q}^{(k+j-1)} {\mathcal U}^{(k+2+j)}_0 (s_j) \gamma^{(k+2+j)}_0\\
& \; + \int^t_0 d s_1 \int^{s_1}_0 d s_2 \cdots \int^{s_{m-1}}_0 d s_m {\mathcal U}^{(k)}_0 ( t-s_1 ) \tilde{Q}^{(k)} \cdots \\
& \; \times {\mathcal U}^{(k+m-1)}_0 ( s_{m-1} - s_m ) \tilde{Q}^{(k+m-1)} \gamma^{(k+2+m)}_{s_m},
\end{split}
\eeq
with the convention $s_0 =t.$ 
\end{remark}

Let ${\bf q} = (p_{k+1}, p_{k+2})$ and ${\bf q}' = (p'_{k+1}, p'_{k+2})$ we have
\be\begin{split}
& \big ( Q_{j, k} \gamma^{(k+2)} \big ) ({\bf p}_k, {\bf p}'_k)\\
= & \int d {\bf q} d {\bf q}' \big [ \gamma^{(k+2)} (p_1, \ldots, p_j + p_{k+1}+p_{k+2}- p'_{k+1}- p'_{k+2}, \ldots, p_k, {\bf q}; {\bf p}'_{k+2})\\
& \quad - \gamma^{(k+2)} ({\bf p}_{k+2}; p'_1, \ldots, p'_j +  p'_{k+1}+p'_{k+2}- p_{k+1}- p_{k+2}, \ldots, p'_k, {\bf q}') \big ]
\end{split}\ee
It is proved in \cite{CP3} (Theorem 4.3 there) that for $\alpha > n/2$ there exists a constant $C=C_{n, \alpha}>0$ depending only on $n$ and $\alpha$ such that
\be
\|  Q_{j,k} \gamma^{(k+2)} \|_{\mathrm{H}^{\alpha}_k} \le C \| \gamma^{(k+2)} \|_{\mathrm{H}^{\alpha}_{k+2}},\quad \forall j=1,\ldots, k.
\ee
Then, by slightly repeating the proof of Theorem \ref{th:nge1}, we can obtain the following theorem.

\begin{theorem}\label{th:CauchyProbQuintic}
Assume that $n \geq 1$ and $\alpha > n/2.$ Suppose $\Gamma_0 = \{\gamma^{(k)}_{0}\}_{k \geq 1}\in\mathcal{H}_{\xi}^{\alpha}$ for some $0<\xi<1.$ Then there exists a constant $C = C_{\alpha, n}$ depending only on $n$ and $\alpha$ such that, for a fixed $0 < T < \xi/C$ with $\eta = \xi - C T,$ the following hold.
\begin{enumerate}[{\rm (i)}]

\item There exists a solution $\Gamma_t = \{\gamma^{(k)}_t \}_{k \geq 1} \in C ( [0,T], \mathcal{H}_{\eta}^{\alpha})$ to the Gross-Pitaevskii hierarchy \eqref{eq:GPHierarchyEquaQuintic} with the initial data $\Gamma_0$ satisfying
\begin{equation}\label{eq:SpacetimeEstimate-nge1Quintic}
\| \Gamma_t \|_{C( [0,T],\mathcal{H}_{\eta}^{\alpha})} \leq \frac{1}{\eta \xi}\| \Gamma_0 \|_{\mathcal{H}_{\xi}^{\alpha}}.
\end{equation}

\item For $T = \xi/(5 C),$ if $\Gamma_t$ and $\Gamma'_t$ in $C ( [0,T], \mathcal{H}_{\eta}^{\alpha})$ are two solutions to \eqref{eq:GPHierarchyEquaQuintic} with initial conditions $\Gamma_{t=0} = \Gamma_0$ and $\Gamma'_{t=0} = \Gamma'_0$ in $\mathcal{H}^{\alpha}_{\xi}$ respectively, then
\begin{equation}\label{eq:SpacetimeEstimate-TwosolutionQuintic}
\| \Gamma_t - \Gamma'_t \|_{C( [0,T], \mathcal{H}_{\eta}^{\alpha})} \leq \frac{5}{4 \xi^2} \| \Gamma_0 - \Gamma'_0  \|_{\mathcal{H}_{\xi}^{\alpha}}.
\end{equation}
Consequently, the solution $\Gamma_t$ to the initial problem \eqref{eq:GPHierarchyEquaQuintic} with the initial data in $\mathcal{H}^{\alpha}_{\xi}$ is unique in $C( [0,T], \mathcal{H}_{\eta}^{\alpha})$ for any $0 < T < \xi/ C.$

\end{enumerate}
\end{theorem}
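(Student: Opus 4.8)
The plan is to repeat the two-part proof of Theorem~\ref{th:nge1} almost verbatim, the only structural difference being that the interaction $\tilde Q^{(k)}=-i\mu Q^{(k)}$ raises the hierarchy index by two instead of by one. So I would start from the integral form \eqref{eq:GPHierarchyIntEquaQuintic} and define the Duhamel iterates $\gamma^{(k)}_{m,t}={\mathcal U}^{(k)}_0(t)\gamma^{(k)}_0+\int_0^t\mathrm{d}s\,{\mathcal U}^{(k)}_0(t-s)\tilde Q^{(k)}\gamma^{(k+2)}_{m-1,s}$, with $\gamma^{(k)}_{0,t}\equiv\gamma^{(k)}_0$. Unfolding this as in \eqref{eq:DuhamelExpanQuintic}, the $j$-th term $\Xi^{(k)}_{j,t}$ of $\gamma^{(k)}_{m,t}$ is built from the composition $\tilde Q^{(k)}\tilde Q^{(k+2)}\cdots\tilde Q^{(k+2j-2)}$; combining the bound $\|Q_{j,k}\gamma^{(k+2)}\|_{\mathrm H^\alpha_k}\le C_{\alpha,n}\|\gamma^{(k+2)}\|_{\mathrm H^\alpha_{k+2}}$ of \cite{CP3} (summed over $j$, which gives $\|Q^{(k)}\gamma^{(k+2)}\|_{\mathrm H^\alpha_k}\le C_{\alpha,n}k\|\gamma^{(k+2)}\|_{\mathrm H^\alpha_{k+2}}$) with the ${\mathcal U}^{(k)}_0$-invariance of $\|\cdot\|_{\mathrm H^\alpha_k}$ and the iterated time integrations yields, in place of \eqref{eq:XiEsitmate},
\[
\|\Xi^{(k)}_{j,t}\|_{\mathrm H^\alpha_k}\le\frac{(C_{\alpha,n}t)^j}{j!}\,k(k+2)\cdots(k+2j-2)\,\|\gamma^{(k+2j)}_0\|_{\mathrm H^\alpha_{k+2j}},
\]
together with the corresponding estimate for the remainder term, carrying $\gamma^{(k+2m)}_{t_m}$ in place of $\gamma^{(k+2m)}_0$.

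The next step is to sum $\sum_{k\ge1}\eta^k\|\gamma^{(k)}_{m,t}\|_{\mathrm H^\alpha_k}$ and re-index by $l=k+2j$ (rather than $l=k+j$ as in the cubic case). Using $k(k+2)\cdots(k+2j-2)\le 2^j\,k(k+1)\cdots(k+j-1)$, hence $\frac{1}{j!}k(k+2)\cdots(k+2j-2)\le 2^j\binom{k+j-1}{j}$, the double sum reduces to binomial/geometric series of exactly the type handled in the computation following \eqref{eq:GammaSum}; carrying them out, choosing $C_{\alpha,n}$ suitably and setting $\eta=\xi-C_{\alpha,n}T$, one obtains $\|\Gamma_{m,t}\|_{C([0,T],\mathcal H^\alpha_\eta)}\le\frac{1}{\eta\xi}\|\Gamma_0\|_{\mathcal H^\alpha_\xi}$. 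The same estimates applied to $\gamma^{(k)}_{m,t}-\gamma^{(k)}_{n,t}$ show the tails are controlled by $\frac{1}{\eta\xi}\sum_{l\ge 2m+1}\xi^l\|\gamma^{(l)}_0\|_{\mathrm H^\alpha_l}\to0$, so $\{\Gamma_{m,t}\}$ is Cauchy in $C([0,T],\mathcal H^\alpha_\eta)$, its limit solves \eqref{eq:GPHierarchyEquaQuintic} and satisfies \eqref{eq:SpacetimeEstimate-nge1Quintic}. This gives (i).

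For (ii) the argument is the exact analogue of part (ii) of the proof of Theorem~\ref{th:nge1}: since \eqref{eq:GPHierarchyEquaQuintic} is linear it suffices to estimate one solution $\Gamma_t$ with data $\Gamma_0$ and then apply the result to $\Gamma_t-\Gamma'_t$. Expanding $\gamma^{(k)}_t$ to order $m$ via \eqref{eq:DuhamelExpanQuintic}, summing $\sum_{k=1}^m\eta^k\|\gamma^{(k)}_t\|_{\mathrm H^\alpha_k}$, and using that for $k\le m$ one has $k(k+2)\cdots(k+2m-2)\le m(m+2)\cdots(3m-2)$, the remainder is bounded by $\big(m(m+2)\cdots(3m-2)/m!\big)(C_{\alpha,n}T/\eta)^m\|\Gamma_t\|_{C([0,T],\mathcal H^\alpha_\eta)}$; Stirling's formula gives $m(m+2)\cdots(3m-2)/m!\le(3e)^m$ up to subexponential factors, so with $T=\xi/(5C_{\alpha,n})$ (hence $\eta=4\xi/5$, $C_{\alpha,n}T/\eta=1/4$) and $C_{\alpha,n}$ chosen large enough to beat the factor $3e$, the remainder tends to $0$ as $m\to\infty$. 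What survives is $\|\Gamma_t\|_{C([0,T],\mathcal H^\alpha_\eta)}\le\frac{5}{4\xi^2}\|\Gamma_0\|_{\mathcal H^\alpha_\xi}$, which applied to $\Gamma_t-\Gamma'_t$ is \eqref{eq:SpacetimeEstimate-TwosolutionQuintic}; uniqueness on $C([0,T],\mathcal H^\alpha_\eta)$ is immediate since the right-hand side vanishes when $\Gamma_0=\Gamma'_0$.

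The only genuinely new point — and hence the part I expect to need the most care — is the combinatorial bookkeeping. Because $Q^{(k)}$ moves two levels up the hierarchy, the product $k(k+2)\cdots(k+2j-2)$ grows like $2^j\,k(k+1)\cdots(k+j-1)$, so the exact identity $\frac{t^j}{j!}k(k+1)\cdots(k+j-1)=\binom{k+j-1}{j}t^j$ used in the cubic proof is only available with an extra $2^j$, and the re-indexing $l=k+2j$ leaves parity "gaps" in the $k$-sum that must be absorbed. This is precisely what forces the enlarged constant $C_{\alpha,n}$ and the weaker prefactors $\frac{1}{\eta\xi}$ and $\frac{5}{4\xi^2}$ (in place of $\frac{\eta}{\xi}$ and $\frac{4}{5}$ in Theorem~\ref{th:nge1}); once this has been accounted for, every remaining computation is line-for-line the one in the proof of Theorem~\ref{th:nge1}.
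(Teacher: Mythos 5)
Your overall plan — iterate Duhamel, use the bound $\|Q_{j,k}\gamma^{(k+2)}\|_{\mathrm H^\alpha_k}\le C_{\alpha,n}\|\gamma^{(k+2)}\|_{\mathrm H^\alpha_{k+2}}$ from \cite{CP3}, sum against the weight $\eta^k$, and run a Cauchy argument — is exactly what the paper has in mind, and your term-by-term estimate
\[
\|\Xi^{(k)}_{j,t}\|_{\mathrm H^\alpha_k}\le\frac{(C_{\alpha,n}t)^j}{j!}\,k(k+2)\cdots(k+2j-2)\,\|\gamma^{(k+2j)}_0\|_{\mathrm H^\alpha_{k+2j}}
\]
is correct. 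The gap is in the step you acknowledge to be the delicate one and then wave past with ``choosing $C_{\alpha,n}$ suitably.'' Because the index jumps by two, re-indexing $l=k+2j$ produces $\eta^k=\eta^{l-2j}=\eta^l\eta^{-2j}$, so the effective small parameter in the binomial/geometric sums is $C_0T/\eta^2$ (with $C_0$ the constant from \cite{CP3}), not $C_0T/\eta$ as in the cubic case. Carrying your bound $\frac{1}{j!}k(k+2)\cdots(k+2j-2)\le 2^j\binom{k+j-1}{j}$ through the same computation that follows \eqref{eq:GammaSum} one gets, in place of $\sum_l(\eta+C_0T)^{l-1}\eta\,\|\gamma^{(l)}_0\|$, the series
\[
\sum_{l\ge1}\Bigl(\eta+\frac{2C_0T}{\eta}\Bigr)^{l-1}\eta\,\|\gamma^{(l)}_0\|_{\mathrm H^\alpha_l},
\]
and to dominate this by a multiple of $\sum_l\xi^l\|\gamma^{(l)}_0\|$ one needs $\eta+2C_0T/\eta\le\xi$, i.e.\ $\eta\ge 2C_0/C$. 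With $\eta=\xi-CT$ ranging over $(0,\xi)$ as $T$ ranges over $(0,\xi/C)$, this forces $C\gtrsim C_0/\xi$. In part (ii) the same obstruction reappears: the remainder factor is $\frac{m(m+2)\cdots(3m-2)}{m!}(C_0T/\eta^2)^m$, whose correct growth rate is $\sim 3^{3m/2}/\sqrt m$ (your $(3e)^m$ overshoots, but that is harmless); with $T=\xi/(5C)$, $\eta=4\xi/5$ one gets $C_0T/\eta^2=\frac{5C_0}{16C\xi}$, so again the remainder vanishes only if $C\gtrsim C_0/\xi$. In both parts, therefore, ``choosing $C_{\alpha,n}$ large'' does not suffice: the constant must grow like $1/\xi$, which contradicts the theorem's assertion that $C$ depends only on $n$ and $\alpha$.

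The fix is structural rather than a matter of picking a bigger constant. Since $\frac{1}{j!}k(k+2)\cdots(k+2j-2)=2^j\binom{k/2+j-1}{j}$, the sum over $j$ closes cleanly into $\eta^l\bigl(1+2C_0t/\eta^2\bigr)^{l/2-1}$, and the natural relation between the weights is therefore $\eta^2+2C_0T=\xi^2$ (equivalently $\eta=\sqrt{\xi^2-2C_0T}$), under which one obtains a bound of the form $\frac{\eta^2}{\xi^2}\|\Gamma_0\|_{\mathcal H^\alpha_\xi}$ with a constant that depends only on $n,\alpha$. If you insist on $\eta=\xi-CT$ as in the statement, you must either let $C$ depend on $\xi$ or restrict the admissible range of $T$; as written, your argument does not yield the theorem for all $0<T<\xi/C$ with $C=C_{\alpha,n}$. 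You should also note that you never actually derive the specific prefactors $\frac{1}{\eta\xi}$ and $\frac{5}{4\xi^2}$ — you assert them by analogy; once the correct weight relation is used, whatever prefactor you obtain should be checked to dominate these (or the statement adjusted).
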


We omit the details of the proof. This result also improves the corresponding one in \cite{CP1} for the regime $\alpha> \frac{n}{2}.$

\

{\it Acknowledgment}\; We are grateful to the anonymous referee for many helpful comments and suggestions, which have been incorporated into this version of the paper.


\begin{thebibliography}{99}
\bibitem{BCFS}V. Bach, T. Chen, J. Fr\"{o}hlich, and I. M. Sigal,
Smooth Feshbach map and operator-theoretic renormalization group methods,
{\it J. Funct. Anal.} {\bf 203} (1), 44-92 (2003).

\bibitem{CP1}T.Chen and N.Pavlovi\'{c},
On the Cauchy problem for focusing and defocusing Gross-Pitaevskii hierarchies,
{\it Discrete Contin. Dyn. Syst.} {\bf 27} (2), 715-739 (2010).

\bibitem{CP2}T.Chen and N.Pavlovi\'{c},
A new proof of existence of solutions for focusing and defocusing Groos-Pitaevskii hierarchies,
arXiv:0906.3277.

\bibitem{CP3}T.Chen and N.Pavlovi\'{c},
The quintic NLS as the mean field limit of a Boson gas with three-body
interactions, {\it J. Funct. Anal.} {\bf 260} (4), 959 - 997 (2011).

\bibitem{DGPS}F.Dalfovo, S.Giorgini, L.P.Pitaevskii, and S.Stringari,
Theory of Bose-Einstein condensation in trapped gases,
{\it Rev. Mod. Phys.} {\bf 71}, 463-512 (1999).

\bibitem{EESY}A.Elgart, L.Erd\"os, B.Schlein, and H.T.Yau,
Gross-Pitaevskii equation as the mean field limit of weakly coupled bosons,
{\it Arch. Rat. Mech. Anal.} {\bf 179}(2), 265-283 (2006).

\bibitem{ESY1}L.Erd\"os, B.Schlein, and H.T.Yau,
Derivation of the Gross-Pitaevskii hierarchy for the dynamics of Bose-Einstein condensate,
{\it Commun. Pure Appl. Math.} {\bf 59}(12), 1659-1741 (2006).

\bibitem{ESY2}L.Erd\"os, B.Schlein, and H.T.Yau,
Derivation of the cubic non-linear Schr\"odinger equation from quantum dynamics of many-body systems,
{\it Invent. Math.} {\bf 167}, 515-614 (2007).

\bibitem{ESY3}L.Erd\"os, B.Schlein, and H.T.Yau,
Derivation of the Gross-Pitaevskii equation for the dynamics of Bose-Einstein condensate,
{\it Ann. Math.} to appear, arXiv:math-ph/0606017.

\bibitem{ESYprl}L.Erd\"os, B.Schlein, and H.T.Yau,
Rigorous derivation of the Gross-Pitaevskii equation,
{\it Phys.Rev.Lett.} {\bf 98}, 040404 (2007).

\bibitem{EY}L.Erd\"os and H.T.Yau,
Derivation of the nonlinear Schr\"odinger equation from a many body Coulomb system,
{\it Adv. Theor. Math. Phys.} {\bf 5}(6), 1169-1205 (2001).

\bibitem{G1}E.P.Gross,
Structure of a quantized vortex in boson systems,
{\it Nuovo Cimento} {\bf 20}, 454-466 (1961).

\bibitem{G2}E.P.Gross,
Hydrodynamics of a superfluid condensate,
{\it J.Math. Phys.} {\bf 4}, 195-207 (1963).

\bibitem{KM}S.Klainerman and M.Machedon,
On the uniqueness of solutions to the Gross-Pitaevskii hierarchy,
{\it Commun.Math.Phys.} {\bf 279}, 169-185 (2008).

\bibitem{LS}E.H.Lieb and R.Seiringer,
Proof of Bose-Einstein condensation for dilute trapped gases,
{\it Phys. Rev. Lett.} {\bf 88}, 170409 (2002).

\bibitem{LSY1}E.H.Lieb, R.Seiringer, and J.Yngvason,
Bosons in a trap: A rigorous derivation of the Gross-Pitaevskii energy functional,
{\it Phys. Rev. A} {\bf 61}, 043602 (2000).

\bibitem{LSY2}E.H.Lieb, R.Seiringer, and J.Yngvason,
A rigorous derivation of the Gross-Pitaevskii energy functional for a two-dimensional Bose gas,
{\it Commun. Math. Phys.} {\bf 224}, 17-31 (2001).

\bibitem{P}L.P.Pitaevskii,
Vortex lines in an imperfect Bose gas,
{\it Sov. Phys. JETP} {\bf 13}, 451-454 (1961).

\end{thebibliography}
\end{document}